\newtheorem{theorem}{Theorem}
\newtheorem{corollary}{Corollary}
\newtheorem{lemma}{Lemma}
\newtheorem{remark}{Remark}
\newcommand{\E}{\mathrm{E}}
\newcommand{\peak}{\mathrm{peak}}
\newcommand{\dd}{\mathrm{d}}
\newcommand{\one}{\mathbbm{1}}
\newcommand{\cls}[1]{\langle #1 \rangle}
\newcommand{\all}{\mathrm{all}}
\begin{document}

\title{Exact Analysis of the Age of Information in the Multi-Source
M/GI/1 Queueing System}

\author{Yoshiaki Inoue and Tetsuya Takine
\thanks{This work was supported in part by JSPS KAKENHI Grant Number
21H03399. Y. Inoue and T. Takine are with 
Department of Information and Communications Technology, 
Graduate School of Engineering, Osaka University, Suita 565-0871, Japan.

E-mail:\{yoshiaki, takine\}@comm.eng.osaka-u.ac.jp.}}%

\if0
\author{\IEEEauthorblockN{Yoshiaki Inoue and Tetsuya Takine}

\IEEEauthorblockA{%
Department of Information and Communications Technology, 
Osaka University, Suita 565-0871, Japan
\\
E-mail:\{yoshiaki, takine\}@comm.eng.osaka-u.ac.jp
}}
\fi

\maketitle
\IEEEpeerreviewmaketitle

\begin{abstract}
We consider a situation that multiple monitoring applications (each
with a different sensor-monitor pair) compete for 
a common service resource such as a communication link.
Each sensor reports the latest state of its own time-varying
information source to its corresponding monitor, incurring queueing
and processing delays at the shared resource. 
The primary performance metric of interest is the age of information
(AoI) of each sensor-monitor pair, which is defined as
the elapsed time 
from the generation of the information currently
displayed on the monitor.
Although the multi-source first-come first-served (FCFS) M/GI/1
queue is one of the most fundamental model to describe such competing
sensors, its exact analysis has been an open problem for years. 
In this paper, we show that the Laplace-Stieltjes
transform (LST) of the stationary distribution of the AoI in this
model, as well as the mean AoI, is given by 
a simple explicit formula, utilizing the double Laplace transform
of the transient workload in the M/GI/1 queue.
\end{abstract}

\section{Introduction}

We consider a situation that multiple time-varying information sources
are monitored remotely. Specifically, each information source is
attached with a sensor node, which reports observed data to a remote
monitor that displays the latest state information received. 
Each observed data needs to be ``served'' by an intermediate service
resource (such as a communication link and a processor) before it is
received by the monitor, so that it experiences queueing and
processing delay there. The service resource is shared by the
source-monitor pairs in the system and the degree of congestion is
highly affected by the sampling rate of the sensors.

Because the state of the information source changes over time, the
value of displayed information degenerates with time. It is thus
important for such a system that the information displayed on each
monitor is kept sufficiently fresh.
The age of information (AoI) \cite{Kaul12-1,Yates21} is a widely used metric
of the freshness of information, which is defined as the elapsed time
from the generation time of currently displayed information. 
Specifically, letting $K$ denote the number of sensor-monitor pairs,
the AoI $A_t^{\cls{k}}$ of the $k$th ($k=0,1,\ldots,K-1$)
sensor-monitor pair at time $t$ is defined as
\begin{equation}
A_t^{\cls{k}} = t - \eta_t^{\cls{k}},
\quad
t \in \mathbb{R},
\label{eq:A_t-def}
\end{equation}
where $\eta_t^{\cls{k}}$ denotes the generation time of the displayed
information at time $t$. 

The mean AoI $\E[A^{\cls{k}}]$ is the primary
performance measure of interest, which is defined as
\[
\E[A^{\cls{k}}] = \lim_{T \to \infty}\frac{1}{T} \int_{-T/2}^{T/2}
A_t^{\cls{k}} \dd t.
\]
Under a fairly general setting, the mean AoI $\E[A^{\cls{k}}]$ is given in terms
of the mean inter-sampling time, the mean delay, and a cross-term
of the inter-sampling time and the delay \cite{Kaul12-1}.
A more detailed characterization of the AoI process is provided by 
the probability distribution $F_{A^{\cls{k}}}(x)$ ($x \geq 0$) of the
AoI, which is defined as a long-run fraction of time
that the AoI process does not exceed $x$:
\[
F_{A^{\cls{k}}}(x)
= 
\lim_{T \to \infty}\frac{1}{T} \int_{-T/2}^{T/2}
\one\{A_t^{\cls{k}} \leq x\} \dd t,
\]
which is known to be given in terms of the probability distributions of
the system delay and the peak AoI value just before updates \cite{Inoue18}.

To characterize the AoI process formally, the system is usually modeled as a
queueing system, where generation of information packets by sensors corresponds
to ``arrivals'', and their reception by monitors corresponds to
``departures''.
Analytical results for the AoI in various queueing systems have been reported in the literature,
e.g., first-come first-served (FCFS) queues \cite{Kaul12-1,Inoue18}, 
last-come first-served (LCFS) queues \cite{Kaul12-2,Costa16,Champati19}, 
and multi-server queues \cite{Kam16,Inoue2021}.
The FCFS queueing model represents the situation where the
application layer does not have any control over the data transmission
performed by lower layer protocols, so that the AoI performance in
this framework is interpreted as that of monitoring applications implemented on top of
the conventional communication mechanisms. 
The LCFS queueing model, on the other hand, gives a priority to newer
information in transmission, which in general leads to better performance 
at the expense of implementation/operation costs, especially due to
the necessity of the cross-layer design and intelligent
functionalities of network components.

The models mentioned above \cite{Kaul12-1,Inoue18, Kaul12-2,Costa16,Champati19,Kam16} 
assume that there is only a single sensor node in the system (i.e., $K=1$), so that
the ``congestion'' of the system refers to the accumulation of
buffered packets generated by the sensor itself.
In order to incorporate the existence of competing sensors that share
the communication resources, it is necessary to consider
\textit{multi-source} queueing models, where each
source corresponds to a sensor node.
In fact, significant research efforts have recently been put into generalizing the AoI
analysis to multi-source queueing models.
Such generalization has been particularly successful for the multi-source
\textit{buffer-less} M/G/1/1 
queues \cite{Dogan2021,Miyoshi2022,Chen2022,Moltafet2022,Abd-Elmagid2023,Zhang2023},
whose analysis is less complicated due to the small state space.

For the standard (infinite-buffer) multi-source FCFS systems, however, the exact analysis of the AoI
has been reported for the simplest M/M/1 queueing model only \cite{Yates2019, Moltafet2020}.
The difficulty in the analysis of the multi-source FCFS
M/GI/1 queue is explained in \cite{Moltafet2020}, where approximate formulas
for the mean AoI are proposed. Roughly speaking, in multi-class FCFS queues, 
an unbounded number of information packets from other sources can arrive
between two successive packets of a single source, which makes it necessary to 
keep track of the whole dynamics of the system driven by other sources.
Because of such difficulty, exact analysis of the AoI in the
multi-source FCFS M/GI/1 queue has been an open problem for years.

The purpose of this paper is to provide a solution to this problem,
by showing that the Laplace-Stieltjes transform (LST) of the stationary
distribution of the AoI in the multi-source FCFS M/GI/1 queue, as well
as the mean AoI, is given by a simple explicit formula.
The key observation is that the system dynamics between 
two successive packets from a single source is characterized in terms
of the double transform \cite[P.\ 83]{Takagi1991} of the transient
workload process in the M/GI/1 queue.
As will be shown later, the complexity of the exact mean AoI formula 
is more or less the same as that of the single-source M/GI/1 queue.

The rest of this paper is organized as follows.
We first provide a formal description of the model in Section \ref{sec:model}. 
We then present the exact analysis of the AoI in Section
\ref{sec:analysis}, where explicit formulas of the LST and the mean 
of the AoI are derived.
Finally, we conclude this paper in Section \ref{sec:conclusion}.

\section{Model}\label{sec:model}

Throughout this paper, we comply with the following rules of notation:
for any non-negative random variable $Y$, its cumulative distribution function (CDF)
is denoted by $F_Y(x)$ ($x \geq 0$) and its LST is denoted by
$f_Y^*(s)$ ($s > 0$):
\[
F_Y(x) := \Pr(Y \leq x),
\qquad
f_Y^*(s) := \E[e^{-sY}] = \int_0^{\infty} e^{-sx} \dd F_Y(x).
\]
We also apply this rule in the opposite direction: given a CDF denoted
by $F_Y(x)$ ($x \geq 0$), the corresponding random variable $Y$ is
defined accordingly.

We consider an FCFS single-server queue with
$K$ different source-monitor pairs. The $k$th ($k=0,1,\ldots,K-1$)
source generates 
information packets according to a Poisson process
with rate $\lambda^{\cls{k}}$ ($\lambda^{\cls{k}} > 0$), 
which arrive at the server immediately after their generations. 
Packets are served by the server in
order of arrival and they update the corresponding monitor on
completion of the service. 
Hereafter we shall refer to the $k$th source and monitor as source
$k$ and monitor $k$ for short.
We assume that service times of packets from 
source $k$ are i.i.d.\ according to a general non-negative
probability distribution with CDF $F_{H^{\cls{k}}}(x)$ ($x \geq 0$),
i.e., the service time distribution
is allowed to differ depending on the source-monitor pair.
Let $\rho^{\cls{k}}$ denote the traffic intensity of source $k$:
\[
\rho^{\cls{k}} = \lambda^{\cls{k}} \E[H^{\cls{k}}],
\quad
k = 0,1,\ldots,K-1.
\]
We assume $\sum_{k=1}^{K} \rho^{\cls{k}} < 1$, so that the system is
stable. Also, we assume that the system is stationary unless otherwise
mentioned.

To define the AoI process for each monitor,
let $\alpha_n^{\cls{k}}$ ($n \in \mathbb{Z}$) denote the generation
time of the $n$th packet of source $k$. Without loss of
generality, we set packet indices so that $\alpha_{-1}^{\cls{k}} < 0 \leq
\alpha_0^{\cls{k}}$ holds for each $k$.
Let $\beta_n^{\cls{k}}$ ($n \in \mathbb{Z}$) denote the reception time of the
$n$th packet by monitor $k$. Its system delay $D_n^{\cls{k}}$ ($n \in \mathbb{Z}$) 
is then given by $D_n^{\cls{k}} = \beta_n^{\cls{k}} - \alpha_n^{\cls{k}}
$.
 
The AoI $A_t^{\cls{k}}$ ($t \geq 0$, $k=1,2,\ldots,K$) of monitor
$k$ at time $t$ is then defined as
\[
A_t^{\cls{k}} = t - \alpha_n^{\cls{k}},
\quad
\beta_n^{\cls{k}} \leq t < \beta_{n+1}^{\cls{k}}.
\]
%
Let $A_{\peak,n}^{\cls{k}}$ denote 
the $n$th peak AoI of monitor
$k$, i.e., the value of AoI just before $n$th update of the monitor:  
\begin{align}
A_{\peak,n}^{\cls{k}}
= 
\lim_{t \to \beta_n^{\cls{k}}-} A_t^{\cls{k}}
&=
\beta_n^{\cls{k}} - \alpha_{n-1}^{\cls{k}}
\nonumber
\\
&=
\beta_n^{\cls{k}} - \alpha_n^{\cls{k}}
+ \alpha_n^{\cls{k}} - \alpha_{n-1}^{\cls{k}}
\nonumber
\\
&=
D_n^{\cls{k}} + G_n^{\cls{k}},
\label{eq:A_peak-DG}
\end{align}
where $G_n^{\cls{k}} := \alpha_n^{\cls{k}} - \alpha_{n-1}^{\cls{k}}$
denotes the inter-generation time between $n$th and $(n-1)$st packets
of source $k$.

Because of the stationarity of the system, the 
probability distribution of $A_t^{\cls{k}}$ does not depend on $t$.
Let $A^{\cls{k}}$ denote a generic random variable following the
stationary distribution of $A_t^{\cls{k}}$.
Similarly, let $G^{\cls{k}}$, $D^{\cls{k}}$, and $A_{\peak}^{\cls{k}}$ denote generic
random variables following the stationary distributions of 
$G_n^{\cls{k}}$, $D_n^{\cls{k}}$ and $A_{\peak,n}^{\cls{k}}$.

\section{Exact Analysis of the AoI distribution}
\label{sec:analysis}

For better readability, we mainly focus on the AoI $A^{\cls{0}}$ of
source $0$ in the analysis below, which can be readily done without loss of generality. 

Because the sequences of amounts of
work brought by sources $1,2,\ldots,K-1$ (excluding source $0$)
are all represented as compound Poisson processes, 
their superposition also forms a compound Poisson process with rate
$\lambda^+$ and the random jump size $H^+$, where 
\begin{align*}
\lambda^+ = \sum_{k=1}^{K-1} \lambda^{\cls{k}},
\quad
f_{H^+}^*(s)
=
\sum_{k=1}^{K-1}
\frac{\lambda^{\cls{k}}}{\lambda^+} \cdot f_{H^{\cls{k}}}^*(s).
\end{align*}
Similarly, we define $\lambda^{\all}$ and $H^{\all}$ as the 
rate and random jump size of the compound Poisson process composed of
all sources including source $0$:
\begin{align}
\lambda^{\all} = \lambda^{\cls{0}} + \lambda^+,
\quad
f_{H^{\all}}^*(s) 
= 
\frac{
\lambda^{\cls{0}} f_{H^{\cls{0}}}^*(s)
+
\lambda^+ f_{H^+}^*(s)
}{\lambda^{\all}}.
\label{eq:lmd-H-all-def}
\end{align}
We further define $\rho^{\all}$ as the total traffic intensity:
\begin{equation}
\rho^{\all} = \rho^{\cls{0}} + \rho^+,
\label{eq:rho-all-def}
\end{equation}
where 
\[
\rho^+ = \rho_1+\rho_2+\cdots+\rho_{K-1}.
\]

To make the notation even simpler, we hereafter drop the superscript
$\cls{0}$ for quantities of source $0$. For example, the inter-generation
time $G_n^{\cls{0}}$ and system delay $D_n^{\cls{0}}$ of source $0$
are written simply as
\begin{align*}
G_n := G_n^{\cls{0}},
\quad
D_n := D_n^{\cls{0}}.
\end{align*}
Similarly, the AoI and peak AoI of source $0$ are written as
\[
A_t := A_t^{\cls{0}},
\quad
A_{\peak,n} = A_{\peak,n}^{\cls{0}}. 
\]
The same rules apply to their stationary versions:
\[
G := G^{\cls{0}},
\quad
D := D^{\cls{0}},
\quad
A := A^{\cls{0}}.
\]
Notice that (\ref{eq:lmd-H-all-def}) and (\ref{eq:rho-all-def}) are rewritten as
\[
\lambda^{\all} = \lambda + \lambda^+,
\qquad
f_{H^{\all}}^*(s) 
= 
\frac{
\lambda f_{H}^*(s)
+
\lambda^+ f_{H^+}^*(s)
}{\lambda^{\all}},
\qquad
\rho^{\all} = \rho + \rho^+.
\]

The starting point of our analysis is the following relation known in
the literature \cite{Inoue18}:
\begin{lemma}
\label{lemma:general-formula}
The LST of the stationary AoI $f_A^*(s)$ of source $0$ is represented in
terms of the LSTs of its stationary system delay $f_D^*(s)$ 
and stationary peak AoI $f_{A_{\peak}}^*(s)$ as
\[
f_A^*(s) 
= 
\lambda
\cdot
\frac{f_D^*(s) - f_{A_{\peak}}^*(s)}{s},
\quad
s > 0.
\]
\end{lemma}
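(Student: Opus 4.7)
My approach is to exploit the sawtooth shape of the AoI process $\{A_t\}$ together with a renewal-reward/Palm-inversion argument. On each inter-update interval $[\beta_n,\beta_{n+1})$, the AoI grows with unit slope starting from $A_{\beta_n}=D_n$, and reaches $A_{\peak,n+1}$ just before the next update. In particular, (\ref{eq:A_peak-DG}) identifies the cycle length as $\beta_{n+1}-\beta_n = A_{\peak,n+1}-D_n$, so on the cycle we have $A_t = D_n + (t-\beta_n)$.

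Using this representation, I would first evaluate the Laplace integral of $e^{-sA_t}$ over one cycle:
\[
\int_{\beta_n}^{\beta_{n+1}} e^{-sA_t}\,\dd t
= e^{-sD_n}\cdot\frac{1-e^{-s(A_{\peak,n+1}-D_n)}}{s}
= \frac{e^{-sD_n}-e^{-sA_{\peak,n+1}}}{s}.
\]

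Next, I would pass to the time average. By stationarity, $f_A^*(s) = \E[e^{-sA}] = \lim_{T\to\infty} T^{-1}\int_{-T/2}^{T/2} e^{-sA_t}\,\dd t$, and a standard ergodic/Palm-inversion identity rewrites this limit as the ratio of the Palm-expected per-cycle reward to the mean cycle length. The mean cycle length equals $\E[\beta_{n+1}-\beta_n] = \E[G] = 1/\lambda$ by stationarity (the telescoping $\beta_{n+1}-\beta_n = G_{n+1} + D_{n+1}-D_n$ has zero mean on the $D$-part under stationarity). Under the Palm measure at update epochs the marks $D_n$ and $A_{\peak,n+1}$ have the stationary distributions of $D$ and $A_{\peak}$, so the Palm-expected reward is $[f_D^*(s)-f_{A_{\peak}}^*(s)]/s$. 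Taking the ratio yields the claimed formula.

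The main technical obstacle is the Palm–time inversion step: one must verify that the marked sequence $\{(D_n, A_{\peak,n+1}, \beta_{n+1}-\beta_n)\}$ is ergodic and integrable, so that cycle-to-time conversion applies and the limit can be identified with a Palm expectation. Both properties follow from the stability assumption $\rho^{\all}<1$ together with the Poisson nature of the arrival streams; once this is granted, the remaining computation is entirely algebraic.
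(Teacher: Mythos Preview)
The paper does not prove this lemma at all: it is quoted as a known result from \cite{Inoue18} and used as the starting point of the analysis. Your sawtooth/Palm-inversion argument is correct and is essentially the derivation given in that reference, so there is nothing to compare against within the present paper.
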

Because the stationary waiting time of packets of source $0$ is identical to that
in the single-input M/GI/1 queue with the arrival rate $\lambda^{\all}$ and 
service time LST $f_{H^{\all}}^*(s)$, we have
\begin{equation}
f_W^*(s) = \frac{(1-\rho^{\all})s}{s - \lambda^{\all} + \lambda^{\all} f_{H^{\all}}^*(s)}
=
\frac{(1-\rho-\rho^+)s}{s - \lambda - \lambda^+ + \lambda f_H^*(s) + \lambda^+ f_{H^+}^*(s)}.
\end{equation}
It then follows from $f_D^*(s)=f_W^*(s)f_H^*(s)$ that 
\begin{align}
f_D^*(s) 
&=
\frac{(1-\rho - \rho^+)s}{s - \lambda - \lambda^+ +  \lambda f_H^*(s)
+ \lambda^+ f_{H^+}^*(s)} \cdot f_{H}^*(s).
\label{eq:f_D}
\end{align}
We thus consider $f_{A_{\peak}}^*(s)$ below.
\begin{lemma}
\label{lemma:A_peak}

The LST $f_{A_{\peak}}^*(s)$ of the stationary peak AoI of source $0$
is given by
\begin{equation}
f_{A_{\peak}}^*(s)
=
\frac{\lambda f_H^*(s)}{s + \lambda - \phi(s)}
\left\{
f_D^*(s)
-
\frac{sf_D^*\left(\psi(s+\lambda)\right)}{\psi(s+\lambda)}
\right\},
\label{eq:A_peak-LST}
\end{equation}
where $\phi(s)$ ($s > 0$) is defined as
\begin{equation}
\phi(s) = s -\lambda^+ + \lambda^+ f_{H^+}^*(s),
\label{eq:phi-def}
\end{equation}
and $\psi(\omega)$ ($\omega > 0$) denotes the inverse function of $\phi(s)$, i.e.,
\begin{equation}
\phi(\psi(\omega)) = \omega.
\label{eq:psi-def}
\end{equation}
\end{lemma}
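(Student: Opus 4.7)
The plan is to decompose $A_{\peak,n} = W_n + H_n + G_n$ via (\ref{eq:A_peak-DG}) (writing $D_n = W_n + H_n$), factor out $f_H^*(s)$ using independence of the fresh service time $H_n$ from $(W_n, G_n)$, and reduce the problem to computing $\E[e^{-s(W_n+G_n)}]$.

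Next I would track the workload process $V(t)$ across two consecutive source-$0$ arrivals. At the instant just after source $0$'s $(n-1)$st arrival, $V$ equals $W_{n-1}+H_{n-1}=D_{n-1}$. During $(\alpha_{n-1}^{\cls{0}},\alpha_n^{\cls{0}})$ no further source-$0$ packet arrives, so $V$ evolves exactly as an auxiliary M/GI/1 workload process $V^+(t;v_0)$ started at $v_0=D_{n-1}$ and driven only by sources $1,\ldots,K-1$ (arrival rate $\lambda^+$, service LST $f_{H^+}^*$). By memorylessness of the source-$0$ Poisson stream, $G_n\sim\mathrm{Exp}(\lambda)$ is independent of $D_{n-1}$ and of the plus-queue sample path, and $W_n = V^+(G_n;D_{n-1})$.

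Conditioning on $D_{n-1}=v_0$ and integrating against the exponential density of $G_n$ gives
\[
\E\!\left[e^{-s(W_n+G_n)}\mid D_{n-1}=v_0\right]
=\lambda\int_0^\infty e^{-(s+\lambda)t}\,\E[e^{-sV^+(t;v_0)}]\,\dd t,
\]
which is $\lambda$ times the double Laplace transform of the transient plus-workload, evaluated at frequency $s$ (in the workload variable) and $s+\lambda$ (in time). Invoking the standard Tak\'{a}cs formula --- obtained by writing the ODE in $t$ for the workload LST, Laplace-transforming in $t$, and eliminating the unknown transform of the idle probability $\Pr(V^+(t;v_0)=0)$ by demanding the numerator vanish at the root $s=\psi(\omega)$ of $\phi(s)=\omega$ --- this double transform equals
\[
\int_0^\infty e^{-\omega t}\,\E[e^{-sV^+(t;v_0)}]\,\dd t
=\frac{1}{\omega-\phi(s)}\left\{e^{-sv_0}-\frac{s}{\psi(\omega)}e^{-\psi(\omega)v_0}\right\},
\]
with $\phi,\psi$ as in (\ref{eq:phi-def})--(\ref{eq:psi-def}).

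Finally, setting $\omega=s+\lambda$, averaging over $D_{n-1}$ with its stationary LST $f_D^*$ to produce the terms $f_D^*(s)$ and $f_D^*(\psi(s+\lambda))$, and multiplying by $f_H^*(s)$, I recover precisely (\ref{eq:A_peak-LST}). I expect the transient double-transform step --- in particular, identifying $\psi(\omega)$ as the appropriate right-half-plane root of $\phi(s)-\omega=0$ that eliminates the idle-probability unknown via a Rouch\'{e}-type argument --- to be the main technical obstacle; the remaining steps are independence bookkeeping enabled by the Poisson input of source $0$.
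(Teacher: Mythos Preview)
Your proposal is correct and follows essentially the same route as the paper: decompose $A_{\peak,n}=G_n+W_n+H_n$, factor out $f_H^*(s)$, recognize $W_n$ as the transient workload of the ``plus'' M/GI/1 queue started at $D_{n-1}$ and sampled at the independent $\mathrm{Exp}(\lambda)$ time $G_n$, and then invoke the known double Laplace transform of that transient workload (which the paper simply cites from Takagi rather than re-deriving via the Rouch\'e argument you sketch). The only cosmetic difference is an index shift ($n$ versus $n{+}1$).
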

\begin{remark}
\label{remark:psi}
It is readily verified that $\phi(s)$ is an increasing continuous function for $s > 0$:
\[
\frac{\dd \phi(s)}{\dd s} = 1 + \lambda^+ \cdot \frac{\dd}{\dd s}\E[e^{-sH^+}]
=
1 - \rho^+ \E[e^{-sH^+}] > 0.
\]
Also, we have $\lim_{s \to 0+}\phi(s) = 0$ and $\lim_{s \to
\infty}\phi(s) = \infty$, which ensures the existence of the inverse
function $\psi(\omega)$ ($\omega > 0$).
\end{remark}
\begin{proof}
From (\ref{eq:A_peak-DG}), we have
\[
A_{\peak,n+1}
= 
G_{n+1} + D_{n+1}
=
G_{n+1} + W_{n+1} + H_{n+1},
\]
where $W_n$ denotes the waiting time of the $n$th packet of source
$0$. We then have
\begin{align}
f_{A_{\peak}}^*(s)
&=
\E\left[e^{-sA_{\peak,n+1}}\right]
\nonumber
\\
&=
\E\left[
e^{-s(G_{n+1}+W_{n+1}+H_{n+1})}
\right]
\nonumber
\\
&=
f_H^*(s)
\cdot
\E\left[
e^{-s(G_{n+1}+W_{n+1})}
\right]
\nonumber
\\
&=
f_H^*(s)
\int_{x=0}^{\infty}
\dd F_{D_n}(x)
\int_{y=0}^{\infty}
\E\left[
e^{-s(y+W_{n+1})}
\mid
D_n = x, G_n = y
\right]
\lambda e^{-\lambda y} \dd y
\nonumber
\\
&=
f_H^*(s)
\int_{x=0}^{\infty}
\dd F_D(x)
\int_{y=0}^{\infty}
\E\left[
e^{-sW_{n+1}} \mid D_n = x, G_n = y
\right]
\lambda e^{-(s+\lambda) y} \dd y.
\label{eq:A-peak-calc0}
\end{align}
Note here that the waiting time $W_{n+1}$ of the ($n+1$)st packet 
of source $0$ is given in terms of the transient workload $V_t$ of
the M/GI/1 queue with arrival rate $\lambda^+$ and the service time
distribution $F_{H^+} (x)$. More specifically, its conditional LST
given $D_n$ and $G_n$ is given by
\begin{align*}
\E\left[
e^{-sW_{n+1}}
\mid
D_n = x, G_n = y
\right]
=
\E[e^{-sV_y} \mid V_0 = x].
\end{align*}
It is known that the double-transform of the workload
process $V_t$ in the M/GI/1 queue is given by \cite[P.\ 83]{Takagi1991}:
\[
\int_0^{\infty}
\E[e^{-sV_y} \mid V_0 = x]
e^{-\omega y}
\dd y
=
\frac{1}{\omega - \phi(s)} 
\left\{
e^{-sx}
-
\frac{se^{-\psi(\omega)x}}{\psi(\omega)}
\right\},
\]
where $\phi(s)$ and $\psi(\omega)$ are defined as (\ref{eq:phi-def})
and (\ref{eq:psi-def}).

Therefore, we have from (\ref{eq:A-peak-calc0}),
\begin{align*}
f_{A_{\peak}}^*(s)
&=
\lambda 
f_H^*(s)
\int_{x=0}^{\infty}
\dd F_D(x)
\int_{y=0}^{\infty}
\E[e^{-sV_y} \mid V_0 = x]
e^{-(s+\lambda) y} \dd y
\\
&=
\lambda 
f_H^*(s)
\int_{x=0}^{\infty}
\frac{1}{s + \lambda - \phi(s)} 
\left\{
e^{-sx}
-
\frac{se^{-\psi(s+\lambda)x}}{\psi(s+\lambda)}
\right\}
\dd F_D(x),
\end{align*}
which implies (\ref{eq:A_peak-LST}).
\end{proof}

We then obtain the following expression for the LST of the AoI
distribution:
\begin{theorem}
\label{thm:f_A}
The LST $f_A^*(s)$ of the stationary AoI $A$ is given by
\begin{equation}
f_A^*(s)
=
\frac{\lambda f_H^*(s)}{s + \lambda - \phi(s)}
\left\{
f_W^*(s) - (1-\rho-\rho^+)
+
\frac{\lambda f_D^*\left(\psi(s+\lambda)\right)}{\psi(s+\lambda)}
\right\},
\label{eq:f_A-final}
\end{equation}
where $f_W^*(s)$ denotes the LST of the stationary waiting time $W$:
\[
f_W^*(s) 
= 
\frac{(1-\rho - \rho^+)s}{\phi(s) - \lambda +  \lambda f_H^*(s)}.
\]
\end{theorem}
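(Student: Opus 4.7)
The plan is to combine Lemma \ref{lemma:general-formula} and Lemma \ref{lemma:A_peak} directly and then manipulate the resulting expression into the claimed form. Specifically, substituting the formula for $f_{A_{\peak}}^*(s)$ from Lemma \ref{lemma:A_peak} into the identity $f_A^*(s) = \lambda (f_D^*(s) - f_{A_{\peak}}^*(s))/s$ of Lemma \ref{lemma:general-formula} yields three summands: (i) $\lambda f_D^*(s)/s$, (ii) $-\lambda^2 f_H^*(s) f_D^*(s) / [s(s+\lambda-\phi(s))]$, and (iii) $\lambda^2 f_H^*(s) f_D^*(\psi(s+\lambda))/[(s+\lambda-\phi(s))\psi(s+\lambda)]$. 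Term (iii) already coincides with the $f_D^*(\psi(s+\lambda))/\psi(s+\lambda)$ term on the right-hand side of (\ref{eq:f_A-final}), so what remains is to show that the sum of (i) and (ii) equals $\lambda f_H^*(s)(f_W^*(s)-(1-\rho-\rho^+))/(s+\lambda-\phi(s))$.

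For that simplification I would factor $\lambda f_D^*(s)/s$ out of (i)$+$(ii) to obtain
\[
\frac{\lambda f_D^*(s)}{s}\cdot\frac{s+\lambda-\phi(s)-\lambda f_H^*(s)}{s+\lambda-\phi(s)}
=
\frac{\lambda f_D^*(s)}{s}\cdot\frac{s-(\phi(s)-\lambda+\lambda f_H^*(s))}{s+\lambda-\phi(s)},
\]
then use the M/GI/1 waiting-time relation $f_W^*(s)=(1-\rho-\rho^+)s/(\phi(s)-\lambda+\lambda f_H^*(s))$ stated in the theorem (which is just a rewriting of (\ref{eq:f_D})) to replace $\phi(s)-\lambda+\lambda f_H^*(s)$ with $(1-\rho-\rho^+)s/f_W^*(s)$, and finally use $f_D^*(s)=f_W^*(s)f_H^*(s)$. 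After cancellation of the factor $s$, one arrives at $\lambda f_H^*(s)(f_W^*(s)-(1-\rho-\rho^+))/(s+\lambda-\phi(s))$, as required.

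The argument is essentially an exercise in algebraic manipulation, so the main obstacle is not conceptual but bookkeeping: one must be careful to keep track of the denominators $s$ and $s+\lambda-\phi(s)$ and to invoke the correct equivalent form of the Pollaczek--Khinchine-type formula for $f_W^*(s)$ at the right moment. Verifying the equivalence of the two displayed expressions for $f_W^*(s)$ (the one in the theorem statement and the one appearing inside (\ref{eq:f_D})) is immediate from the definition $\phi(s)=s-\lambda^++\lambda^+f_{H^+}^*(s)$, which one would note in passing. Once these substitutions are made, the derivation of (\ref{eq:f_A-final}) is essentially automatic.
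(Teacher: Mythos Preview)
Your proposal is correct and follows essentially the same route as the paper: substitute Lemma~\ref{lemma:A_peak} into Lemma~\ref{lemma:general-formula}, then simplify the $f_D^*(s)$ terms using the Pollaczek--Khinchine relation $f_D^*(s)=f_W^*(s)f_H^*(s)$ with $f_W^*(s)=(1-\rho-\rho^+)s/(\phi(s)-\lambda+\lambda f_H^*(s))$. The only cosmetic difference is that the paper substitutes the explicit form of $f_D^*(s)$ first to recognize $(s+\lambda-\phi(s)-\lambda f_H^*(s))f_D^*(s)=sf_D^*(s)-(1-\rho-\rho^+)s\,f_H^*(s)$ directly, whereas you route the same cancellation through $f_W^*(s)$; the algebra is equivalent.
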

\begin{proof}
Using $\phi(s)$ defined in (\ref{eq:phi-def}), we rewrite 
the expression for the LST $f_D^*(s)$ of the stationary system delay as
\begin{equation}
f_D^*(s) 
= 
\frac{(1-\rho - \rho^+)s}{\phi(s) - \lambda +  \lambda f_H^*(s)} \cdot
f_{H}^*(s).
\label{eq:f_D-by-phi}
\end{equation}
We then have from Lemma \ref{lemma:general-formula} and Lemma \ref{lemma:A_peak},
\begin{align*}
f_A^*(s)
&=
\frac{\lambda}{s}
\left\{
f_D^*(s) - f_{A_{\peak}}^*(s) 
\right\}
\\
&=
\frac{\lambda}{s}
\cdot
\frac{1}{s + \lambda - \phi(s)}
\left\{
(s+\lambda -\phi(s) - \lambda f_H^*(s)) f_D^*(s)
+
\frac{s\lambda f_H^*(s)f_D^*\left(\psi(s+\lambda)\right)}{\psi(s+\lambda)}
\right\}
\\
&=
\frac{\lambda}{s}
\cdot
\frac{1}{s + \lambda - \phi(s)}
\left\{
sf_D^*(s) - (1-\rho-\rho^+)s \cdot f_H^*(s)
+
\frac{s\lambda f_H^*(s)f_D^*\left(\psi(s+\lambda)\right)}{\psi(s+\lambda)}
\right\},
\end{align*}
which implies (\ref{eq:f_A-final}).
\end{proof}
Taking the derivative of (\ref{eq:f_D}) and letting $s \to 0+$, we
have the mean system delay $\E[D]$: 
\begin{equation}
\E[D] = \frac{\lambda\E[H^2]+\lambda^+\E[(H^+)^2]}{2(1-\rho-\rho^+)} + \E[H].
\label{eq:ED}
\end{equation}
An explicit formula for the mean AoI is then obtained as follows:
\begin{corollary}
\label{cor:EA}
The mean AoI $\E[A]$ is given by
\begin{equation}
\E[A] 
= 
\frac{\lambda\E[H^2]+\lambda^+\E[(H^+)^2]}{2(1-\rho-\rho^+)} + \E[H]
+
\frac{\rho^+}{\lambda}
+ 
\frac{1-\rho - \rho^+}{\lambda f_H^*(\gamma)},
\label{eq:EA}
\end{equation}
where $\gamma$ is defined as
\[
\gamma := \psi(\lambda),
\]
i.e., it represents the unique solution of the following equation:
\begin{equation}
x - \lambda -\lambda^+ + \lambda^+ f_{H^+}^*(x) = 0, \quad
\lambda \leq x \leq \lambda + \lambda^+.
\label{eq:gamma-def}
\end{equation}
\end{corollary}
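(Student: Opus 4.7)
The plan is to compute $\E[A] = -f_A^{*\prime}(0)$ directly from the formula in Theorem \ref{thm:f_A}. As a preliminary sanity check I would first verify that $f_A^*(0) = 1$. The crucial identity here is that $\phi(\gamma) = \lambda$ by definition of $\gamma = \psi(\lambda)$, which causes the denominator of $f_D^*(\gamma)$ in (\ref{eq:f_D-by-phi}) to collapse to $\lambda f_H^*(\gamma)$, giving $f_D^*(\gamma) = (1-\rho-\rho^+)\gamma/\lambda$. This in turn makes $\lambda f_D^*(\gamma)/\gamma = 1-\rho-\rho^+$, cancelling the $-(1-\rho-\rho^+)$ term in the bracket of (\ref{eq:f_A-final}) and leaving $f_A^*(0) = 1$. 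The same identity will be used repeatedly in the derivative computation.

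I would then factor $f_A^*(s) = U(s) V(s)$, where $U(s) = \lambda f_H^*(s)/(s+\lambda-\phi(s))$ and $V(s)$ is the bracketed expression. Because $U(0) = V(0) = 1$, the product rule reduces the task to computing $U'(0) + V'(0)$. The quantity $U'(0)$ is routine: using $\phi'(0) = 1 - \rho^+$ and $f_H^{*\prime}(0) = -\E[H]$, one obtains $U'(0) = -\E[H] - \rho^+/\lambda$. The first summand of $V(s)$ is likewise routine, contributing $f_W^{*\prime}(0) = -(\E[D]-\E[H])$.

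The core of the proof is the third summand $\lambda f_D^*(\psi(s+\lambda))/\psi(s+\lambda)$. Its derivative at $s=0$ requires the chain rule together with $\psi'(\lambda) = 1/\phi'(\gamma)$ (from implicit differentiation of $\phi(\psi(\omega)) = \omega$), and also $f_D^{*\prime}(\gamma)$. The cleanest way to obtain the latter is logarithmic differentiation of (\ref{eq:f_D-by-phi}): the $f_H^{*\prime}(\gamma)/f_H^*(\gamma)$ contributions from numerator and denominator cancel against each other, yielding the compact identity $f_D^{*\prime}(\gamma)/f_D^*(\gamma) = 1/\gamma - \phi'(\gamma)/(\lambda f_H^*(\gamma))$. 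Substituting this together with the value of $f_D^*(\gamma)$ found above, I expect the two $\phi'(\gamma)$ terms to cancel and the entire derivative of the third summand to simplify to $-(1-\rho-\rho^+)/(\lambda f_H^*(\gamma))$.

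Collecting the pieces yields $-f_A^{*\prime}(0) = \E[H] + (\E[D]-\E[H]) + \rho^+/\lambda + (1-\rho-\rho^+)/(\lambda f_H^*(\gamma)) = \E[D] + \rho^+/\lambda + (1-\rho-\rho^+)/(\lambda f_H^*(\gamma))$, and substituting (\ref{eq:ED}) produces (\ref{eq:EA}). The main obstacle I anticipate is the clean cancellation of $\phi'(\gamma)$ in the third-summand derivative; without it, the mean AoI would a priori depend on $\phi'(\gamma) = 1 - \rho^+ \E[e^{-\gamma H^+}]$, and its disappearance is precisely the structural reason the final formula has the same complexity as the single-source M/GI/1 case.
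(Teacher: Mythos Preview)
Your proposal is correct and follows essentially the same route as the paper's proof: the same multiplicative decomposition $f_A^*(s)=U(s)V(s)$, the same use of $\phi(\gamma)=\lambda$ to evaluate $f_D^*(\gamma)$, and the same chain-rule treatment of $\psi(s+\lambda)$. The only difference is that you compute $f_D^{*\prime}(\gamma)$ via logarithmic differentiation, which makes the cancellation of $\phi'(\gamma)$ transparent, whereas the paper differentiates (\ref{eq:f_D-by-phi}) by brute force and only sees the cancellation after substituting into the expression for $f_{A,2}^{(1)}(0)$; your version is slightly cleaner but not a different argument.
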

The proof of Corollary \ref{cor:EA} is 
straightforward; we provide the proof in the appendix. 

The expression (\ref{eq:EA}) for the mean AoI $\E[A]$ largely
resembles the known result for the single-source M/GI/1 queue.
In fact, it is readily verified that as $\lambda^+ \to 0+$, we have
$\gamma \to \lambda$ and
\[
\E[A] \to \frac{\lambda\E[H^2]}{2(1-\rho)} + \E[H] + \frac{1-\rho}{\lambda f_H^*(\lambda)},
\]
which agrees with the mean AoI in the single-source M/GI/1 queue \cite[Eq.\ (36)]{Inoue18}.

\section{Conclusion}\label{sec:conclusion}

In this paper, we presented the exact analysis of the AoI in the
multi-source FCFS M/GI/1 queueing model. We derived the explicit formula for
the LST of the stationary distribution of the AoI in Theorem
\ref{thm:f_A}, based on the observation that the LST of the stationary
peak AoI distribution is tractable using the explicit double-transform
of the transient workload in the M/GI/1 queue.
We also showed in Corollary \ref{cor:EA} that the mean AoI $\E[A]$ in
this model is given by a simple explicit formula. 
The result shows that the complexity of the mean AoI formula in the multi-source M/GI/1 queue 
is more or less the same as that of the single-source case, suggesting
its usefulness as a simple framework in understanding the AoI
performance in multi-source communication systems.

\appendix

Note first that the existence and uniqueness of the solution of
(\ref{eq:gamma-def}) is verified in the same way as Remark
\ref{remark:psi}: the left-hand side of this equation is increasing
and continuous for $\lambda \leq x \leq \lambda+\lambda^+ $ and it
takes a negative value at $x = \lambda$ and a positive value at $x = \lambda+\lambda^+$.

We decompose the LST of the AoI into the following product form
(\ref{eq:f_A-final}):
\[
f_A^*(s) = f_{A,1}^*(s) \cdot f_{A,2}^*(s),
\]
where
\begin{align}
f_{A,1}^*(s)
&:=
\frac{\lambda f_H^*(s)}{s + \lambda - \phi(s)},
\label{eq:f_A1-def}
\\
f_{A,2}^*(s)
&:=
f_W^*(s) - (1-\rho-\rho^+)
+
\frac{\lambda f_D^*\left(\psi(s+\lambda)\right)}{\psi(s+\lambda)}.
\label{eq:f_A2-def}
\end{align}
Noting that the definition of $\gamma$ 
and (\ref{eq:gamma-def}) imply $\phi(\gamma) = \lambda$,
we have from (\ref{eq:f_D-by-phi}),
\begin{equation}
f_D^*(\gamma) = \frac{(1-\rho-\rho^+)\gamma}{\lambda}.
\label{eq:f_D-gamma}
\end{equation}
It then follows from $\lim_{s \to 0}\phi(s)= 0$ that
\[
\lim_{s \to 0+} f_{A,1}^*(s) = 1,
\quad
\lim_{s \to 0+} f_{A,2}^*(s) = 1.
\]
Therefore, letting
\[
f_{A,i}^{(1)}(s) := (-1)\cdot \frac{\dd f_{A,i}^*(s)}{\dd s},
\quad
i = 1,2,
\]
we have
\begin{align}
\E[A] 
&= 
(-1)\cdot\lim_{s \to 0+}\frac{\dd f_A^*(s)}{\dd s}
\nonumber
\\
&=
\lim_{s \to 0+}
f_{A,1}^{(1)}(s)
+
\lim_{s \to 0+}
f_{A,2}^{(1)}(s).
\label{eq:EA-by-A1-A2}
\end{align}
We thus consider $f_{A,1}^{(1)}(s)$ and $f_{A,2}^{(1)}(s)$ below.

By definition (cf.\ (\ref{eq:phi-def}) and (\ref{eq:psi-def})), we have
\[
\frac{\dd \phi(s)}{\dd s}
= 
1 - \lambda^+ f_{H^+}^{(1)}(s),
\quad
\frac{\dd \psi(\omega)}{\dd \omega}
= 
\frac{1}{1 - \lambda^+ f_{H^+}^{(1)}(\psi(\omega))},
\]
where
\[
f_{H^+}^{(1)}(s) := (-1)\cdot \frac{\dd f_{H^+}^*(s)}{\dd s}.
\]
It then follows immediately from
(\ref{eq:f_A1-def}), (\ref{eq:f_A2-def}), and (\ref{eq:f_D-gamma})
that
\begin{align}
\lim_{s \to 0+}
f_{A,1}^{(1)}(s)
&=
\E[H] + \frac{\rho^+}{\lambda},
\label{eq:f_A1}
\\
\lim_{s \to 0+}
f_{A,2}^{(1)}(s)
&=
\E[W] 
+
\frac{\lambda f_D^{(1)}(\psi(\lambda))}{\psi(\lambda)}
\cdot
\lim_{\omega \to \lambda} \frac{\dd \psi(\omega)}{\dd \omega}
+
\frac{\lambda f_D^*(\psi(\lambda))}{(\psi(\lambda))^2}
\cdot
\lim_{\omega \to \lambda} \frac{\dd \psi(\omega)}{\dd \omega}
\nonumber
\\
&=
\E[W] 
+ 
\frac{1}{1 - \lambda^+ f_{H^+}^{(1)}(\gamma)}
\left(
\frac{\lambda}{\gamma} \cdot f_D^{(1)}(\gamma)
+
\frac{\lambda f_D^*(\gamma)}{\gamma^2}
\right)
\nonumber
\\
&=
\E[W] 
+ 
\frac{1}{\{1 - \lambda^+ f_{H^+}^{(1)}(\gamma)\}\gamma}
\left(
\lambda f_D^{(1)}(\gamma)
+
1-\rho-\rho^+
\right),
\label{eq:f_A2}
\end{align}
where
\[
f_D^{(1)}(s) := (-1)\cdot \frac{\dd f_D^*(s)}{\dd s}.
\]
%

The rest 
is to determine $f_D^{(1)}(\gamma)$, which is
straightforward with 
$\phi(\gamma) = \lambda$:  
\begin{align*}
f_D^{(1)}(\gamma) 
&= 
-
\frac{1-\rho - \rho^+}{\phi(\gamma) - \lambda +  \lambda f_H^*(\gamma)} \cdot
f_{H}^*(\gamma)
\\
&\quad{}+
\frac{(1-\rho - \rho^+)\gamma}
{(\phi(\gamma) - \lambda +  \lambda f_H^*(\gamma))^2} 
\cdot
\{1-\lambda^+ f_{H^+}^{(1)}(\gamma) - \lambda f_H^{(1)}(\gamma)\}
\cdot
f_{H}^*(\gamma)
\\
&\quad{}+
\frac{(1-\rho - \rho^+)\gamma}{\phi(\gamma) - \lambda +  \lambda
f_H^*(\gamma)} \cdot f_H^{(1)}(\gamma)
\\
&=
-
\frac{1-\rho - \rho^+}{\lambda f_H^*(\gamma)} \cdot
f_{H}^*(\gamma)
\\
&\quad{}+
\frac{(1-\rho - \rho^+)\gamma}
{(\lambda f_H^*(\gamma))^2} 
\cdot
\{1-\lambda^+ f_{H^+}^{(1)}(\gamma) - \lambda f_H^{(1)}(\gamma)\}
\cdot
f_{H}^*(\gamma)
\\
&\quad{}+
\frac{(1-\rho - \rho^+)\gamma}{\lambda f_H^*(\gamma)} \cdot f_H^{(1)}(\gamma)
\\
&=
\frac{1-\rho - \rho^+}{\lambda^2 f_H^*(\gamma)}
\cdot
\{-\lambda f_H^*(\gamma) + \gamma -\gamma \lambda^+ f_{H^+}^{(1)}(\gamma)\}.
\end{align*}
Therefore, we have from (\ref{eq:EA-by-A1-A2}), (\ref{eq:f_A1}), and (\ref{eq:f_A2}),
\begin{align*}
\E[A] &= \E[D] + \frac{\rho^+}{\lambda}
\\
&\qquad{}+ 
\frac{1}{\{1 - \lambda^+ f_{H^+}^{(1)}(\gamma)\}\gamma}
\left(
\frac{1-\rho - \rho^+}{\lambda f_H^*(\gamma)}
\cdot
\{-\lambda f_H^*(\gamma) + \gamma -\gamma \lambda^+ f_{H^+}^{(1)}(\gamma)\}
+
1-\rho-\rho^+
\right).
\end{align*}
We thus obtain (\ref{eq:EA}) from this equation and (\ref{eq:ED}).
\qed


\begin{thebibliography}{9}

\bibitem{Kaul12-1}
S.\ K.\ Kaul, R.\ D.\ Yates, and M.~Gruteser, 
``Real-Time Status: How Often Should One Update?'' 
in \emph{Proc.\ of IEEE INFOCOM 2012}, pp.\ 2731--2735, 2012.

\bibitem{Yates21}
R.\ Yates, Y.\ Sun,  D.\ R.\ Brown, S.\ K.\ Kaul, E.\ Modiano, and S. Ulukus, 
``Age of Information: An Introduction and Survey,'' 
\emph{IEEE J.\ Sel.\ Areas Commun.}, vol.\ 39, no. 5, pp.\ 1183--1210, 2021. 

\bibitem{Inoue18} Y.\ Inoue, H.\ Masuyama, T.\ Takine, and T.\ Tanaka,
``A General Formula for the Stationary Distribution of the Age of
Information and Its Application to Single-Server Queues,''
\emph{IEEE Trans.\ Inf.\ Theory}, vol.\ 65, no.\ 12, pp.\ 8305--8324, 2019.

\bibitem{Kaul12-2}
S.\ K.\ Kaul, R.\ D.\ Yates, and M.~Gruteser, 
``Status Updates through Queues,'' 
in \emph{Proc.\ of CISS 2012}, 2012.

\bibitem{Costa16}
M.~Costa, M.~Codreanu, and A.~Ephremides, 
``On the Age of Information in Status Update Systems with Packet Management,'' 
\emph{IEEE Trans.\ Inf.\ Theory}, vol.\ 62, no.\ 4, pp.\ 1897--1910, 2016.

\bibitem{Champati19} J.\ P.\ Champati, H.\ Al-Zubaidy, and J.\ Gross, 
``On the Distribution of AoI for the GI/GI/1/1 and GI/GI/1/2* Systems:
Exact Expressions and Bounds,'' 
in \emph{Proc.\ of IEEE INFOCOM 2019}, pp.\ 37--45, 2019.

\bibitem{Kam16}
C.~Kam, S.~Kompella, G.~D. Nguyen, and A.~Ephremides, 
``Effect of Message Transmission Path Diversity on Status Age,'' 
\emph{IEEE Trans.\ Inf.\ Theory}, vol.\ 62, no.\ 3, pp.\ 1360--1374, 2016.

\bibitem{Inoue2021}
Y.\ Inoue and T.\ Kimura, 
``Age-Effective Information Updating Over Intermittently Connected
MANETs,'' IEEE J.\ Sel.\ Areas Commun., vol.\ 39, no.\ 5, pp.\ 1293--1308, 2021.

\bibitem{Yates2019}
R.\ D.\ Yates S.\ K.\ Kaul, 
``The Age of Information: Real-Time Status Updating by Multiple
Sources,'' IEEE Trans.\ Inf.\ Theory, 
vol.\ 65, no.\ 3, pp.\ 1807--1827, 2019.


\bibitem{Moltafet2020}
M.\ Moltafet, M.\ Leinonen, and M.\ Codreanu, 
``On the Age of Information in Multi-Source Queueing Models,'' 
IEEE Trans.\ Commun.\ vol.\ 68, no.\ 8, pp.\ 5003--5017, 2020.

\bibitem{Dogan2021}
O.\ Do\v{g}an and N. Akar, 
``The Multi-Source Probabilistically Preemptive M/PH/1/1 Queue with
Packet Errors,'' IEEE Trans.\ Commun, vol.\ 69, no.\ 11, pp.\ 7297--7308, 2021.

\bibitem{Miyoshi2022}
Y.\ Jiang and N.\ Miyoshi, 
``Joint Performance Analysis of Ages of Information in a Multi-Source Pushout Server,'' 
IEEE Trans.\ Inf.\ Theory, vol.\ 68, no.\ 2, pp.\ 965--975, 2022.

\bibitem{Chen2022}
Z.\ Chen, D.\ Deng, C.\ She, Y.\ Jia, L.\ Liang, S.\ Fang, M.\ Wang,
and Y.\ Li, 
``Age of Information: The Multi-Stream M/G/1/1 Non-Preemptive System," 
IEEE Trans.\ Commun., vol.\ 70, no.\ 4, pp.\ 2328--2341, 2022.

\bibitem{Moltafet2022}
M.\ Moltafet, M.\ Leinonen, and M.\ Codreanu, 
``Moment Generating Function of Age of Information in Multisource
M/G/1/1 Queueing Systems,'' IEEE Trans.\ Commun.,
vol.\ 70, no.\ 10, pp.\ 6503--6516, 2022.

\bibitem{Abd-Elmagid2023}
M.\ A.\ Abd-Elmagid and H.\ S.\ Dhillon, 
``Joint Distribution of Ages of Information in Networks,'' 
IEEE Trans.\ Inf.\ Theory, vol.\ 69, no.\ 9, pp.\ 5701--5722, 2023.

\bibitem{Zhang2023}
T.\ Zhang, S.\ Chen, Z.\ Chen, Z.\ Tian, Y.\ Jia, M.\ Wang, and D.\ O.\ Wu,
``AoI and PAoI in the IoT-Based Multisource Status Update System:
Violation Probabilities and Optimal Arrival Rate Allocation,'' 
IEEE Internet Things J., vol.\ 10, no.\ 23, pp.\ 20617--20632, 2023.

\bibitem{Takagi1991} H.\ Takagi,
\emph{Queueing Analysis: Vacation and Priority Systems, Part 1},
Elsevier Science Publishers, Amsterdam, 1991.

\end{thebibliography}
\end{document}